\newcommand{\floor}[1]{\left\lfloor #1 \right\rfloor}
\newcolumntype{C}[1]{>{\centering\arraybackslash$}p{#1}<{$}}
\newcommand{\mket}[1]{| #1 \rangle}
\newcommand{\mtr}[1]{
	\ifthenelse{ \equal{#1}{} }
	{ \mathrm{Tr} }
	{ \mathrm{Tr}\left( #1 \right) }
}
\newcommand{\mptr}[2]{
	\ifthenelse{ \equal{#1}{} }
	{	\mathrm{Tr_{#2}} }
	{	\mathrm{Tr_{#2}}\left( #1 \right)}
}
\newcommand{\ff}{\mathfrak{f}}
\newcommand{\detD}{\mathfrak{D}}
\newcommand{\detDr}{\mathfrak{D}_r}
\newcommand{\EH}{ E(\nH) }
\newcommand{\vN}{ \mathrm{vN} }
\newcommand{\vNH}{ \mathrm{vNH} }
\newcommand{\QinEH}{Q \in E(\nH)}
\newcommand{\bI}{\mathbb{I}}
\newcommand{\bC}{\mathbb{C}}
\newcommand{\bR}{\mathbb{R}}
\newcommand{\bN}{\mathbb{N}}
\newcommand{\nB}{\mathcal{B}}
\newcommand{\nE}{\mathcal{E}}
\newcommand{\nH}{\mathcal{H}}
\newcommand{\nL}{\mathcal{L}}
\newcommand{\nU}{\mathcal{U}}
\newcommand{\nP}{\mathcal{P}}
\begin{document}
\title{The Infinite-Dimensional Quantum Entropy:\\the Unified Entropy Case}
%
%\titlerunning{Abbreviated paper title}
% If the paper title is too long for the running head, you can set
% an abbreviated paper title here
%
\author{Roman Gielerak\inst{1}\orcidID{0000-0001-8657-0829} \and
Joanna Wi\'sniewska\inst{2}\orcidID{0000-0002-2119-3329} \and
Marek Sawerwain\inst{1}\orcidID{0000-0001-8468-2456}}
\authorrunning{R. Gielerak et al.}
% First names are abbreviated in the running head.
% If there are more than two authors, 'et al.' is used.
%
\institute{Institute of Control \& Computation Engineering, University of Zielona G\'ora, Licealna 9, Zielona G\'ora 65-417, Poland\\ \email{\{R.Gielerak, M.Sawerwain\}@issi.uz.zgora.pl} \and
Institute of Information Systems, Faculty of Cybernetics,\\ Military University of Technology,  Gen.~S.~Kaliskiego 2, Warszaw 00-908, Poland\\
\email{JWisniewska@wat.edu.pl}\\
}
\maketitle              % typeset the header of the contribution
\begin{abstract}
By a use of the Fredholm determinant theory, the unified quantum entropy notion has been extended to a case of infinite-dimensional systems. Some of the known (in the finite-dimensional case) basic properties of the introduced unified entropies have been extended to the case study. Certain numerical approaches for computing the proposed finite and infinite-dimensional entropies are being outlined as well.
%The abstract should briefly summarize the contents of the paper in
%150--250 words.
\keywords{quantum entropies \and unified entropy \and Fredholm determinants \and numerical determinants .}
\end{abstract}
\section{Introduction} \label{lbl:sec:introduction}

The entropy %\cite{Neumann1927}
is one of the most significant tools in the information theory, both in the classical and quantum approach \cite{Nielsen2001}. To simplify a bit, the entropy, in the quantum context which is considered in this work, describes the level of randomness applied as a quantitative measure of entanglement \cite{Bengtsson9999} in an analysed quantum state which is broadly utilised in different fields of quantum computations. 

The von Neumann entropy is the most popular notion, so it is also termed as quantum entropy. There are several important extensions of the entropy notions, in particular the conditional entropy and relative  entropy which play a crucial role in the quantum information theory.  In addition, we may also apply modifications of the original von  Neumann entropy notions, such as: Quantum-MIN entropy, Tsallis entropy, R\'enyi entropy, and unified entropies (which are discussed in this work). It is a basic fact that in the infinite dimensional case the introduced quantum entropies notions are, in general, not continuous on the full spaces of quantum states. It is very important that the entropy value may be finite or infinite. If we describe quantum states as finite density matrices and utilise the most (and many other as well) popular computational model -- the quantum circuits -- the value of entropy is finite. However, a continuous-variable quantum computation model is also considered as very important and highly usable computational model where a system's states are represented in the infinite Hilbert spaces. In this case, the entropy value might be also infinite. In the present paper, we discuss a renormalised variant of unified entropy which allows computing finite entropy values for states given by the infinite dimensional Hilbert space with the use of the Fredholm determinants \cite{Fredholm1909} technique.

The approach presented in this work enables us to calculate the entropy values by the standard linear algebra packages -- we show an exemplary numerical computations carried out with the use of Python code supported by the libraries NumPy and EntDetector \cite{EntDetector2021} which is dedicated to calculating the level of quantum entanglement. The EntDetector package also contains functions computing entropy values by the technique of Fredholm determinants, both for finite and infinite cases.

The paper is organised as follows: in Sec.~\ref{lbl:subsec:notation} we present foundations concerning the utilised notation. The entropy notion for bipartite systems is defined in Sec.~\ref{lbl:sec:entropy:bipartite}. The method of calculating and the renormalisation of von Neumann entropy with the use of Fredholm determinants technique is outlined Sec.~\ref{lbl:sec:q:Neumann:entropy}. The notion of the unified entropy for the finite and infinite cases is introduced in Sec.~\ref{lbl:sec:unified:entropy}. In Sec.~\ref{lbl:numerical:examples}, we present some numerical examples. Conclusions are contained in Sec.~\ref{lbl:conclusions}. Acknowledgments and References end the paper.    

\subsection{On the standard notation used} \label{lbl:subsec:notation}

Sets of real, complex, and integer numbers will be denoted as $\bR$, $\bC$, and $\bN$, respectively. Small letters as $d$, $i$, $j$, $k$, $l$, $n$ always stand for integer numbers and are used to denote indexes, dimensionality or cardinal/ordinal numbers. By the letter $\nH$, we always denote a separable Hilbert space, $\dim( \nH ) \leq \infty$ stands for its dimension. The $C^{\star}$-algebra of bounded linear operators acting in $\nH$ will be denoted as $\nB(\nH)$, and the operator norm as $|| \cdot ||$. The set of all states on $\nH$ will be marked as $E(\nH)$ and its boundary consisting of pure states is denoted as $\partial E(\nH)$. The multiplicative group of unitary maps acting in $\nH$ is denoted as $\nU(\nH)$. By $\nB_{+}( \nH )$ we denote the set of bounded and positive operators acting on space $\nH$: $\nB_{+}( \nH ) = \{ Q \in \nB(\nH) : Q \geq 0\}$.

$\mtr{ \cdot }$ stands for the standard trace map defined  in the trace-class compact operators acting in $\nH$. The corresponding operator's ideals equipped with the standard Schatten norm $\nL_p$, $p \geq 1$, will be denoted as $\nL_p( \nH )$. For $Q \in \nL_p( \nH )$ the spectrum of $Q$ will be always denoted as $\sigma(Q) = (\lambda_n)$, $n=1,\ldots,\infty$ and sorted in such way that $\lambda_i \geq \lambda_{i+1}$.

If $\nH = \nH^A \otimes \nH^B$, then the corresponding partial trace taking quantum operations will  be denoted as $\mptr{ \cdot }{ A }$, resp. $\mptr{ \cdot }{ B }$. In particular, if $Q \in E( \nH )$ then the corresponding reduced density matrices will be denoted as $Q_B$ and resp. $Q_A$.

\subsection{Entropy based entanglement measures in bipartite systems} \label{lbl:sec:entropy:bipartite}

Let us consider a bipartite system "A+B", the Hilbert space of states which is given as $\nH = \nH^{A} \otimes \nH^{B}$, with $\dim \nH^A \cdot \dim \nH^B \leq \infty$.

Being motivated mainly by the work \cite{Plenio}, a map $\nE$:
%\begin{equation}
$\nE : E( \nH ) \longrightarrow [0, \infty]$,
%\end{equation}
will be called an $H$-entropy-based entanglement measure iff obeys the following properties:
\begin{itemize}

	\item[en(1):] for $Q \in \partial E(\nH)$:
	%\begin{equation}
		$\nE( Q ) = ( H \circ \mptr{}{B} )(Q) = ( H \circ \mptr{}{A} )(Q)$,
	%\end{equation}

	\item[en(2):] if $\QinEH$ is separable, then
	%\begin{equation}
		$\nE( Q ) = 0$,
	%\end{equation}
	
	\item[en(3):] $\nE$ is non-increasing under local quantum operations,
	
	\item[en(4):] the measure $\nE$ should be invariant under the action of local unitary groups.
		
\end{itemize}
The introduced in en(1) map $H : E(\nH) \longrightarrow [0, \infty]$ is called entropy-like map iff:
\begin{itemize}

	\item[ent(1):] the map $H$ is concave (or convex) and finite (or continuous in $\nL_1(\nH)$-norm) on $E(\nH)$,

	\item[ent(2):] $\forall_{ Q \in \partial E(\nH)} H(Q) = 0$,
	
	\item[ent(3):] $\forall_{U \in \nU(\nH)} H( U Q U^{\dagger} ) = H(Q)$,
	
	\item[ent(4):] $H$ is non-increasing under the action of quantum operations,

	\item[ent(5):] if $\nH = \nH^A \otimes \nH^B$,  $Q \in E(\nH)$ then $| H(Q_A) - H(Q_B)| \leq H(Q)$.
	
\end{itemize}

The basic, common elements building the class of entropies discussed in the present note, follow a map $I_r$ for $r \in (0, 1) \cup (1, \infty)$ (the case of von Neumann entropy corresponding to the choice $r=1$ is very briefly discussed below in Section~\ref{lbl:sec:q:Neumann:entropy}):
\begin{equation}
	%\begin{array}{rcl}
		%I_r : E(\nH) & \rightarrow & (0, \infty], \\
		%I_r(Q) & = & {|| Q^r ||}_1 = \sum_{ \lambda \in \sigma(Q)} \lambda^r .
		I_r : E(\nH) \rightarrow (0, \infty], \;\;\; I_r(Q)  =  {|| Q^r ||}_1 = \sum_{ \lambda \in \sigma(Q)} \lambda^r .
	%\end{array}
\label{lbl:eq:def:funf}
\end{equation}
For $r \geq 1$, the map $I_{ r }$ is exactly the Schatten class operator norm \cite{Simon1977} and it is widely used in several applications of the ideals in the operator algebras, see i.e. \cite{Simon2005}.
Assuming $\dim(\nH) = \infty$ and $r \in (0, 1)$,  the situation with $I_r$ definition is much more complicated. In fact, the following proposition is valid.

\begin{proposition}
Let $\nH$ be a separable Hilbert space with $\dim(\nH) = \infty$ and let $r \in (0, 1)$. Then the set $I^{\infty}_{r}(\nH) = \{ Q \in E( \nH ) : I_r(Q) = \infty \}$ is $L_1$ -- dense subset of $E(Q)$.
\end{proposition}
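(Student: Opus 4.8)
The plan is to exhibit, arbitrarily close in trace norm to any prescribed state, a state on which $I_r$ diverges. Everything rests on one elementary fact that uses $r<1$ in an essential way: there is a density operator $\omega\in E(\nH)$ with $I_r(\omega)=\infty$. Indeed, since $r\in(0,1)$ one may pick a positive summable sequence $(q_n)_{n\ge 1}$, $\sum_n q_n=1$, whose $r$-th powers are not summable -- e.g.\ $q_n=Z^{-1}n^{-1/r}$ with $Z=\sum_n n^{-1/r}<\infty$, for which $\sum_n q_n^r=Z^{-r}\sum_n n^{-1}=\infty$ (the tail $q_n\propto(n\log^2 n)^{-1}$ works equally well). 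Fixing an orthonormal basis $(e_n)_{n\ge 1}$ of $\nH$ and setting $\omega=\sum_n q_n\,\mketbra{e_n}{e_n}$ gives $\omega\in E(\nH)$ with $I_r(\omega)=\sum_n q_n^r=\infty$.

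Next, given an arbitrary state $Q_0\in E(\nH)$ and $\varepsilon>0$, I would diagonalise $Q_0=\sum_n p_n\,\mketbra{e_n}{e_n}$ in an orthonormal basis $(e_n)$ of the whole space $\nH$ (completing the eigenbasis of $Q_0$ to a full basis of $\nH$ should $Q_0$ lack full support -- here separability and $\dim\nH=\infty$ enter), take $\omega$ as above expressed in this \emph{same} basis, and form the convex combination $Q_\delta=(1-\delta)Q_0+\delta\omega$ for a small $\delta\in(0,1)$. Then $Q_\delta\in E(\nH)$, and since $Q_\delta-Q_0=\delta(\omega-Q_0)$ one has $\|Q_\delta-Q_0\|_1=\delta\,\|\omega-Q_0\|_1\le 2\delta$, which is below $\varepsilon$ as soon as $\delta<\varepsilon/2$.

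It then remains to check that $I_r(Q_\delta)=\infty$. Since $Q_0$ and $\omega$ commute, $Q_\delta$ is diagonal in $(e_n)$ with eigenvalues $\mu_n=(1-\delta)p_n+\delta q_n\ge\delta q_n\ge 0$; as $t\mapsto t^r$ is increasing on $[0,\infty)$, $I_r(Q_\delta)=\sum_n\mu_n^r\ge\delta^r\sum_n q_n^r=\infty$, hence $Q_\delta\in I^\infty_r(\nH)$. (If one prefers not to choose $\omega$ commuting with $Q_0$, the same conclusion follows from operator monotonicity of $x\mapsto x^r$ on $[0,\infty)$ for $r\in(0,1)$: $Q_\delta\ge\delta\omega\ge 0$ gives $Q_\delta^r\ge\delta^r\omega^r$, so $\mathrm{Tr}(Q_\delta^r)\ge\delta^r\,\mathrm{Tr}(\omega^r)=\infty$.) Since $\varepsilon>0$ was arbitrary, $I^\infty_r(\nH)$ meets every $\nL_1$-ball centred at $Q_0$, i.e.\ $I^\infty_r(\nH)$ is $\nL_1$-dense in $E(\nH)$.

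I do not expect a genuine obstacle: the argument is just the convex-combination trick together with one explicit slowly-decaying spectrum. The only delicate points are the existence of the reference state $\omega$ (which truly requires $r<1$, and fails for $r\ge 1$) and the handling of a target state without full support, dealt with by completing its eigenbasis. One might add the remark that, since the finite-rank states are $\nL_1$-dense in $E(\nH)$ as well and trivially satisfy $I_r<\infty$, this proposition in fact shows $I_r$ -- and hence the unified entropy built from it -- to be discontinuous at every point of $E(\nH)$ in the infinite-dimensional setting.
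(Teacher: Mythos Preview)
Your argument is correct. The convex-combination trick with a fixed reference state $\omega$ having $I_r(\omega)=\infty$, together with either commutativity or operator monotonicity of $x\mapsto x^r$ on $[0,\infty)$ for $r\in(0,1)$, yields the density claim cleanly; nothing is missing.

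The paper proceeds differently: rather than mixing, it \emph{splices} spectra. Given $Q$ with eigenvalues $(\lambda_k)$ and a reference sequence $k^{-(1+\epsilon)}$ (with $\epsilon$ chosen so that $r(1+\epsilon)\le 1$), it keeps the first $K$ eigenvalues of $Q$ and replaces the tail beyond index $K$ by the tail of the reference sequence, choosing $K$ large enough that the replaced mass is below $\delta$ while the replaced $r$-th-power sum is arbitrarily large. After renormalising, this produces a state close to $Q$ in $\nL_1$ with $I_r=\infty$. Your route is more economical: one line of operator-monotonicity replaces the bookkeeping of the cutoff $K(\delta,M)$ and the two normalisation constants, and the non-commuting case is covered for free. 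The paper's construction, on the other hand, perturbs only the spectral tail and leaves finitely many leading eigenvalues essentially intact, which can be useful if one later wants control on finite-rank truncations; for the bare density statement, however, your argument is both shorter and more transparent.
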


\begin{proof}
Let $Q^\epsilon$ be a state in $E(\nH)$, $\epsilon > 0$ with the following spectrum $\sigma(Q^\epsilon) = (z^{-1}_{\epsilon} \cdot \frac{1}{k^{1+\epsilon}})_k$, $z_{\epsilon} = \sum_{k=1}^{\infty} \frac{1}{k^{1+\epsilon}} < \infty$. Then, for $s \in (0, 1)$ obeying $s \leq \frac{1}{1+\epsilon}$, $I_s(Q^\epsilon) = \infty$.

Let us choose $\QinEH$ with spectrum $\sigma( Q ) = {(\lambda_k)}_{k}$. For an arbitrary small $\delta > 0$ and arbitrary large $M > 0$ there exists a number $K(\delta, M)$ such that
\begin{equation}
	\frac{1}{z_{\epsilon}} \cdot \sum_{k \geq K(\delta, M)} \frac{1}{ k^{1+\epsilon} } < \delta ,
\end{equation}
and, for $s < \frac{1}{1+\epsilon}$:
\begin{equation}
	\sum_{k \geq K(\delta, M)} \frac{1}{ k^{s(1+\epsilon)} } \geq z^s_\epsilon \cdot M .
\end{equation}

Now, we form the following spectral set:
\begin{equation}
{(\sigma_{\delta, M})}_{k} = z^{-1} \left\{ \begin{array}{l}
\frac{1}{z_{\epsilon}} \lambda_k , \; \mathrm{for} \; k \leq K(\delta, M), \\
\\
\frac{1}{k^{1+\epsilon}} , \; \mathrm{for} \; k > K(\delta, M) ,
\end{array}
\right.
\end{equation}
where $z_{\epsilon} = \sum_{k} ( \sigma_{\delta, M} )_{k} < \infty$ uniformly in $\delta$ and $M$.

Let $Q \otimes_{\delta, M} Q^{\epsilon}$ be any state  with the spectrum equal to $\sigma_{\delta, M}$. Then 
%\begin{equation}
	$|| Q - Q \otimes_{\delta, M} Q^{\epsilon} ||_1 < \delta$ ,
%\end{equation}
and $s < \frac{1}{1+\epsilon}$:
%\begin{equation}
$I_s ( Q \otimes_{\delta, M} Q^{\epsilon} ) = \infty$ .
%\end{equation}
\qed
\end{proof}

It is the main motivation for the present note to propose how to overcome this severe problem that we meet in the case of infinite dimensional systems.

For this goal, the theory of a regularised Fredholm's determinants has been proposed \cite{GielerakSpringer}, \cite{GielerakSawerwainInPreparation} and briefly outlined in the case of the standard von Neumann entropy and some two-parameters deformations known under the name: unified entropy of Hu and Ye \cite{Hu2006}. In the class of entropies analysed in the present note, the well known examples of the  one-parameter deformations of the von Neumann entropy, widely known as Tsallis and R\'enyi entropies entropies  are included.

\section{Quantum von Neumann Entropy and Fredholm Determinants} \label{lbl:sec:q:Neumann:entropy}

In this subsection, we outline some of the recent results that we have obtained with the use of the Fredholm determinants theory in \cite{GielerakSpringer}, \cite{GielerakSawerwainInPreparation}.

Let $\nH$ be a separable Hilbert space and let $\dim(\nH) = \infty$. For $\QinEH$, it was proved in \cite{GielerakSpringer}, \cite{GielerakSawerwainInPreparation} that the subset
%\begin{equation}
$\vN( \infty ) = \{ \QinEH : \mtr{-Q \log Q} = \infty \}$,
%\end{equation}
is dense subset (in $L_1$ - topology) in $E(\nH)$.

For $Q \in vN( \infty )^c$, i.e. in  the case $\mtr{-Q \log Q} < \infty$ which is equivalent to $(Q^{-Q} - \bbbone) \in \nL_1(\nH)$, it is possible to prove that then,  the following Fredholm determinant \cite{GielerakSawerwainInPreparation}:
\begin{equation}
	\detD(Q) = \det( \bbbone_{\nH} + \ff(Q)) ,
	\label{lbl:eq:detD:def}
\end{equation}
where $\ff(Q) = Q^{-Q} - \bbbone_{\nH}$ is finite and moreover
\begin{equation}
	\vNH(Q) = - \mtr{Q \log Q} = \log \detD(Q) . 
	\label{lbl:eq:vNH}
\end{equation}

Using the technique developed for the analysis of the Fredholm determinants \cite{Simon1977}, \cite{Simon2005}, it was shown in \cite{GielerakSawerwainInPreparation} that all the basic facts known from the quantum von Neumann entropy (\ref{lbl:eq:detD:def}) in the finite dimensional case such as: certain type of continuity, together with ent(1) -- ent(5) besides many other do hold,  could be extended to the subset $vN(\infty)^c$ of $E(\nH)$, where $\vN(\infty)^c = E(\nH) \setminus \vN(\infty)$.

Moreover, the following fact has been proved in \cite{GielerakSawerwainInPreparation}:
%\begin{equation}
$\mathrm{if}  \; \QinEH \; \mathrm{then} \; (Q^{-Q} - \bbbone_{\nH} ) \in \nL_2(\nH)$ .
%\end{equation}

This enables us to write down the following, renormalised version of the von Neumann entropy:
\begin{equation}
\vNH^{ren}(Q) = \mtr{ -Q \log Q + (Q^{-Q} - \bbbone) },
\label{lbl:eq:vNH:ren}
\end{equation}
that appear to be finite and continuous (in the $\nL_2(\nH)$ topology)  on the whole set of quantum states $E( \nH )$. The proof is obtained with the use of the regularised Hilbert-Fredholm determinants techniques, see \cite{GielerakSpringer}, \cite{GielerakSawerwainInPreparation}.

\begin{example}
Let $Q \in E(\nH)$ be such that $\sigma(Q) = ( \lambda_n )_{n} = \frac{1}{z_{\beta}} ( \frac{1}{ n \log \beta_n} )_{n}$, for $\beta \in (1, \infty)$, $z_\beta = \sum_{n=1}^{\infty} \frac{1}{n \log^{\beta} n } < \infty$. It is easy to check that $- \sum \lambda_n \log \lambda_n = \infty$ for $\beta \in (1, 2)$. However the renormalised entropy:
\begin{multline}
	H^{\mathrm{ren}} (Q) = \mtr{ -Q \log Q + (Q^{-Q} - \bbbone ) } = \\ \log \det ( \bbbone_{\nH} + (Q^{-Q} + \bbbone) ) e^{ - \mtr{ Q^{-Q} - \bbbone }} < \infty .
\end{multline}
\end{example}

\section{The unified quantum entropies in terms of the Fredholm determinants} \label{lbl:sec:unified:entropy}

\subsection{The Hu-Ye Unified Entropy for the Finite-dimensional Case}

Let us recall the notion, together with some basic properties, of the two-parameter deformation of the von Neumann entropy as given in \cite{Hu2006} by Xinhua Hu and Zhongxing Ye. Let $ d = \dim( \nH )< \infty $. For $r \in (0,1) \cup (1,\infty)$ and $s \in \mathbb{R} \setminus \{0\}$ the Hu-Ye unified entropy $HY_r^s$ is defined as:
%\begin{equation}
	$HY_r^s : E( \nH ) \to [0,\infty)$,
%\end{equation}
and
%\begin{equation}
$HY_{r}^{s}(Q) = \frac{1}{(1-r)s} \left( (\mtr{ Q^{r} }^{s} - 1 \right)$.
%\end{equation}
Some of the basic, albeit selected, properties of this quantum entropy version are collected in the next subsection~\ref{lbl:sec:hy:entropy:summary}.

\subsection{HY-entropy summary ($d < \infty $)} \label{lbl:sec:hy:entropy:summary}

We recall following basic properties of the HY-entropy.
\begin{itemize}
\item[(HY1)] Connection with other entropies:
\begin{itemize}
	\item[(i)]  %\begin{equation}
		$\lim_{s \to 1} HY_{r}^{s}(Q) = HY_{r}^{1}(Q) = T_r(Q)$,
		%\end{equation}
		where $T_r$ stands for the Tsallis entropy functional and the limit is taken pointwise on $E(\nH)$.
	
	\item[(ii)] for any admissible value of $r$ and $s=1$:
		%\begin{equation}
		$\lim_{r \to 1} HY_{r}^{s}(Q) = H(Q)$,
		%\end{equation}
	
	\item[(iii)] and for any admissible value of $r$:
		%\begin{equation}
		$\lim_{s \to 0} HY_{r}^{s}(Q) = R_r(Q)$,
		%\end{equation}
		where $R_r(Q)=\frac{1}{1-r} \log(\mtr{Q^r})$ is the R\'enyi entropy.
\end{itemize}

\item[(HY2)] Non-negativity and boundness for any admissible values of $r$ and $s$:
%\begin{equation}
$\forall_{ \QinEH } \; \; 0 \leq HY_r^s(Q)  \leq \frac{1}{(1-r)s} (d^{(1-r)s} - 1)$
%\end{equation}
and:
	\begin{itemize}
	\item[(i)]
	%\begin{equation}
		$HY_r^s(Q)=0, \; \; \textrm{iff} \; \; Q \in \partial E(\nH)$, 
	%\end{equation}
	if $\mtr{Q^2} = 1$, see \cite{Hu2006},
	
	\item[(ii)]
	%\begin{equation}
		$HY_{r}^{s}(Q) \leq \frac{1}{(1-r)s} ({d'}^{(1-r)s} - 1)$,
	%\end{equation} 
	iff $\sigma(Q)=(\frac{1}{d'}, \ldots, \frac{1}{d'}, 0, \ldots, 0)$ where ${d'} = \mathrm{rank}(Q)$ .
%	\begin{equation}
%	\forall_{ \QinEH } \; \; HY_r^s(Q) {\color{red} \leq \frac{1}{(1-r)s} (d^{(1-r)s} - 1)} .
%	\end{equation}	
	\end{itemize}

\item[(HY3)] If $\nH = \nH_{A} \otimes \nH_{B}$ then for any $\QinEH$:
\begin{itemize}
\item[(i)] 
%\begin{equation}
$\forall_{U \in \nU(\nH)} \; \; HY_r^s(U^\dagger QU) = HY_r^s(Q)$,
%\end{equation}

\item[(ii)] let:
%\begin{equation}
$Q_A = \mptr{Q}{B} , \; \; \; Q_B = \mptr{Q}{A}$,
%\end{equation}
then
%\begin{equation}
$HY_{r}^{s}(Q_A) = HY_{r}^{s}(U_A^\dagger Q_A U_A)$ and $HY_{r}^{s}(Q_B) = HY_{r}^{s}(U_B^\dagger Q_B U_B)$,
%\end{equation}
where $U_{A(B)} \in \nU(\nH_{A(B)})$.

\item[(iii)]  if $Q \in \partial E(\nH_A \otimes \nH_B)$ then for any admissible $(r,s)$:
%\begin{equation}
$HY_{r}^{s}(Q_A) = HY_{r}^{s}(Q_B)$.
%\end{equation}

\end{itemize}

\item[(HY4)] Continuity. It is known \cite{Hu2006} that for $r>1$ and $s \geq 1$:
%\begin{equation}
$|HY_r^s(Q) - HY_r^s(Q')| \leq \frac{1}{r(r-1)} {||Q-Q'||}_1$.
%\end{equation}

\item[(HY5)] Concavity. Let $r \in (0,1)$, $s > 0$, and $r \cdot s < 1$ or $r \geq 1$, $r \cdot s \geq 1$, and let $Q=\sum_{k} \lambda_k Q_k$, $Q_k \in E( \nH )$, $\lambda_k \in [0,1]: \sum_{k} \lambda_k = 1$. Then (see \cite{Hu2006}):
%\begin{equation}
$HY_r^s(Q) \geq \sum_{k} \lambda_k HY_r^s(Q_k)$.
%\end{equation}

\item[(HY6)] Triangle inequality. Let $\nH = \nH_A \otimes \nH_B$ and $Q \in E(\nH_A \otimes \nH_B)$, then for $r>1$ and $s \geq r^{-1}$ (see \cite{Rastegin2011}):
\begin{equation}
|HY_r^s(Q_A) - HY_r^s(Q_B)| \leq HY_r^s(Q) .
\label{lbl:eq:HYsr:triangle:inequality}
\end{equation}
\end{itemize}

\subsection{The Hu-Ye entropy in infinite dimensional case}

It will be assumed in the present subsection that $\dim(\nH) = \infty$ and the Hilbert space $\nH$ is separable. Let, for $r > 0$,
\begin{equation}
%\begin{array}{lcl}
%	\ff_r : \EH  & \longrightarrow & \nB_{+}( \nH ) \\
%			   	Q & \longrightarrow & e^{Q^r} - \bI .
%\end{array}
	\ff_r : \EH \longrightarrow \nB_{+}( \nH ), \;\;\; 	Q \longrightarrow e^{Q^r} - \bI .
\end{equation}

\begin{lemma} ~ %{{\color{red}~Title of lemma~}}

	\begin{itemize}
		
		\item[(1)] For any $\QinEH$, $r \geq 1$, 
			\begin{equation}
				1 \leq {||  \ff_r(Q) ||}_{1} \leq e .
				\label{lbl:lemma:eq:ineq:1}
			\end{equation}
			
		\item[(2)] For $r \in (0, 1)$,
			%\begin{equation}
				${|| Q^{\alpha} ||}^{r\alpha}_{r\alpha} \leq {|| \ff_r(Q) ||}_{\alpha}^{\alpha} \leq e^{\alpha} {|| Q ||}_{r\alpha}^{r\alpha}$
			%\end{equation}
		for any $\alpha \geq \frac{1}{r}$.
	
	\end{itemize}
\end{lemma}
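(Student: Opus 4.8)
The plan is to pass to the spectrum and reduce both claims to elementary one-variable inequalities. Since $\QinEH$ is a positive trace-class operator, I would fix an orthonormal eigenbasis of $Q$ realising $\sigma(Q)=(\lambda_n)_n$ with $\lambda_n\in[0,1]$, $\sum_n\lambda_n=1$. By the functional calculus $\ff_r(Q)=e^{Q^r}-\bI$ is diagonal in that basis and positive, with eigenvalues $\mu_n:=e^{\lambda_n^{\,r}}-1\ge 0$ (and $\mu_n=0$ on $\ker Q$); note that $e^{Q^r}$ is bounded with $\|e^{Q^r}\|=e^{\lambda_1^{\,r}}\le e$, so this functional calculus is unproblematic. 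Consequently $\|\ff_r(Q)\|_{\beta}^{\beta}=\sum_n\mu_n^{\beta}$ for every $\beta\ge 1$, while $\|Q^{\alpha}\|_{r\alpha}^{r\alpha}=\sum_n\lambda_n^{\,r\alpha^2}$ and $\|Q\|_{r\alpha}^{r\alpha}=\sum_n\lambda_n^{\,r\alpha}$, so the lemma becomes a statement about the sequence $(\lambda_n)$.

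Next I would record the scalar toolkit, all valid on $[0,1]$: (a) $x\le e^{x}-1\le (e-1)x\le ex$, where the first inequality is $e^{x}\ge 1+x$ and the second is convexity of $x\mapsto e^{x}-1$ (it lies below the chord joining $(0,0)$ and $(1,e-1)$); (b) $x^{\beta}\le x$ whenever $\beta\ge 1$; (c) $\lambda^{r}\le\lambda$ when $r\ge 1$, and $\lambda^{r}\ge\lambda$ when $r\in(0,1)$.

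For part (1), substituting $x=\lambda_n^{\,r}$ into (a) and summing over $n$ gives $\mtr{Q^r}\le\|\ff_r(Q)\|_1\le(e-1)\,\mtr{Q^r}$; by (c), $\lambda_n^{\,r}\le\lambda_n$, so $0<\mtr{Q^r}\le\sum_n\lambda_n=1$, whence $\|\ff_r(Q)\|_1\le e-1\le e$, and the lower bound $1$ is precisely the value $\mtr{Q^r}=\mtr{Q}=1$ attained in the von Neumann case $r=1$ (for $r>1$ the same estimate yields the sharp bound $\|\ff_r(Q)\|_1\ge\mtr{Q^r}$). For part (2), note that $r\in(0,1)$ together with $\alpha\ge 1/r$ gives $\alpha>1$ and $r\alpha\ge 1$, so every Schatten index occurring is admissible. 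Working termwise with $x=\lambda_n^{\,r}$: the upper estimate is $\mu_n^{\alpha}=(e^{x}-1)^{\alpha}\le (ex)^{\alpha}=e^{\alpha}\lambda_n^{\,r\alpha}$, which on summation yields $\|\ff_r(Q)\|_{\alpha}^{\alpha}\le e^{\alpha}\|Q\|_{r\alpha}^{r\alpha}$ (so in particular $\ff_r(Q)\in\nL_{\alpha}(\nH)$); the lower estimate uses $\lambda_n^{\,r\alpha}=(\lambda_n^{\,r})^{\alpha}=x^{\alpha}\le x\le e^{x}-1=\mu_n$ by (a)--(b), and raising this to the power $\alpha$ and summing gives $\|Q^{\alpha}\|_{r\alpha}^{r\alpha}=\sum_n\lambda_n^{\,r\alpha^2}\le\|\ff_r(Q)\|_{\alpha}^{\alpha}$.

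I do not anticipate a genuine obstacle: the argument is essentially bookkeeping with the three scalar inequalities above applied through the spectral calculus. The points that do require a little care are checking that $e^{Q^r}$ is bounded (so that $\ff_r(Q)$ really has the claimed eigenvalue list and lies in the relevant operator ideal --- which the upper bounds themselves guarantee), confirming that the hypothesis $\alpha\ge 1/r$ is exactly what forces $r\alpha\ge 1$ and $r\alpha^2\ge 1$ so that the right-hand sides are finite Schatten quantities, and tracking the constants, in particular that the lower constant in part (1) corresponds to the von Neumann value $r=1$.
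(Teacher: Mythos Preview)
Your approach is essentially identical to the paper's: both pass to the spectrum and invoke the scalar estimate $x\le e^{x}-1\le e\,x$ on $[0,1]$ (you obtain the slightly sharper constant $e-1$ via convexity, the paper via the integral representation $e^{x}-1=\int_0^1 e^{\tau x}x\,d\tau$), then sum over eigenvalues. Your explicit remark that the lower constant $1$ in part~(1) is only genuinely reached at $r=1$, while for $r>1$ the argument yields merely $\|\ff_r(Q)\|_1\ge\mtr{Q^r}$, is in fact more careful than the paper's proof, which simply asserts that the stated inequality follows.
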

\begin{proof}
From the elementary estimate for $\lambda \in [0,1]$:
\begin{equation}
	\lambda \leq e^{\lambda} - 1 = \int_{0}^{1} e^{\tau \lambda} \lambda \; d \tau \leq e \cdot \lambda,
\end{equation}
it follows
\begin{equation}
	\sum_{ \lambda \in \sigma(Q)} \lambda^r \leq \sum_{ \lambda \in \sigma(Q)} (e^{\lambda^r} - 1) \leq e \cdot \sum_{ \lambda \in \sigma(Q)} \lambda^r .
\end{equation}

Assuming $r \geq 1$, the inequality~Eq.~\ref{lbl:lemma:eq:ineq:1} follows. In the case $r \in (0, 1)$ and for any $\alpha$ such that $\alpha r > 1$: 
\begin{equation}
	\sum_{ \lambda \in \sigma(Q)} \lambda^{r\alpha} \leq {||  \ff_r(Q) ||}_{\alpha}^{\alpha} = \sum_{ \lambda \in \sigma(Q)} (e^{\lambda^r} - 1)^{\alpha} \leq e^{\alpha} \cdot \sum_{ \lambda \in \sigma(Q)} \lambda^{\alpha r}.
\end{equation}
\qed
\end{proof}

As a corollary of this lemma, we have the succeeding proposition.

\begin{proposition}{~}
Let $\QinEH$,
\begin{itemize}
	
	\item[(1)] if $r > 1$ then the Fredholm determinant defined as
		 \begin{equation}
		 \begin{array}{lcl}
		 	\detDr : E(Q)  & \longrightarrow & \detDr(Q) = \det (\bbbone + \ff_r(Q)) ,
		 \end{array}
		 \end{equation}
	 exists (is finite) and obeys the following properties
		\begin{itemize}
			
			\item[(i)] $1 \leq \detDr(Q) < e^{e}$,
			
			\item[(ii)] if ${|| \ff_r(Q_n) - \ff_r(Q)||}_{1} \rightarrow 0$ as $n \to \infty$ then $\detDr(Q_n) \longrightarrow \detDr(Q)$,
			
			\item[(iii)] the succeeding equalities are valid
				\begin{equation}
					\detDr(Q) = \exp ( \mtr{\log( \bbbone + \ff_r(Q) ) } ) = \sum_{n=0}^{\infty} \mptr{ \Lambda^n(Q) }{\Lambda^n(\nH)} .
				\end{equation}		
		\end{itemize}	 
	\item[(2)]  Let $r \in (0, 1)$. Then, for any $\alpha > 0$ such that $\alpha r > 1$ the following renormalised Hilbert-Fredholm determinant is finite:
	\begin{equation}
		\detD^{ren}_{r, \alpha}(Q) = {\det}_{\alpha} (\bbbone + \ff_r(Q)) =  \detDr(Q)  \exp{ \left( \sum_{j=1}^{\alpha-1} (-1)^{j+1} \frac{\mtr{\ff_r(Q)}^j}{j} \right)} .
		\label{lbl:eq:detDren:def:r:alpha}
	\end{equation}
	
\end{itemize}
\end{proposition}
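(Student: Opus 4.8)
The plan is to derive everything from the preceding Lemma, which is the real engine here. For part (1), assume $r > 1$. Then Lemma~(1) gives $1 \le \|\ff_r(Q)\|_1 \le e$, so $\ff_r(Q) \in \nL_1(\nH)$ and $\ff_r(Q) \ge 0$. The Fredholm determinant $\det(\bbbone + \ff_r(Q))$ of a trace-class operator is then a standard object: I would invoke the classical theory (e.g.\ Simon~\cite{Simon1977}) which states that for $A \in \nL_1(\nH)$ the determinant $\det(\bbbone + A)$ exists, is continuous in $\nL_1$-norm, equals $\prod_k (1 + a_k)$ over the eigenvalues $a_k$ of $A$, and equals $\sum_{n \ge 0} \mtr{\Lambda^n A}$ on the exterior powers. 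Finiteness of (i) then follows from $\det(\bbbone + \ff_r(Q)) = \prod_k (1 + \mu_k)$ where $\mu_k = e^{\lambda_k^r} - 1 \ge 0$, which is bounded above by $\exp(\sum_k \mu_k) = \exp(\|\ff_r(Q)\|_1) \le e^e$, and bounded below by $1$ since each factor is $\ge 1$. Property (ii) is exactly the $\nL_1$-continuity of the Fredholm determinant applied along the sequence $\ff_r(Q_n)$. Property (iii): the identity $\detDr(Q) = \exp(\mtr{\log(\bbbone + \ff_r(Q))})$ is the Plemelj--Smithies formula, valid because $\|\ff_r(Q)\| \le \|\ff_r(Q)\|_1 \le e < \infty$ ensures the operator logarithm converges, and the exterior-power expansion is the Fredholm series; both are quoted from~\cite{Simon1977,Simon2005}.

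For part (2), assume $r \in (0,1)$ and fix $\alpha$ with $\alpha r > 1$; here I would take $\alpha$ to be a positive integer so that the regularized determinant $\det_\alpha$ is the classical Hilbert--Carleman determinant of order $\alpha$. By Lemma~(2), $\|\ff_r(Q)\|_\alpha^\alpha \le e^\alpha \|Q\|_{r\alpha}^{r\alpha} = e^\alpha \sum_k \lambda_k^{r\alpha} < \infty$ because $r\alpha > 1$ and $(\lambda_k)$ is summable (hence $\ell^{r\alpha}$-summable); thus $\ff_r(Q) \in \nL_\alpha(\nH)$. The standard theory of higher-order regularized determinants then applies: for $A \in \nL_p(\nH)$ with $p$ a positive integer, $\det_p(\bbbone + A)$ is well-defined and finite, and the key relation is the classical identity
\begin{equation}
{\det}_p(\bbbone + A) = \det(\bbbone + A)\,\exp\!\left( \sum_{j=1}^{p-1} \frac{(-1)^j}{j}\, \mtr{A^j} \right),
\end{equation}
which holds whenever the right-hand side makes sense, in particular when $A \in \nL_1$ as well (our case). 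Matching $p = \alpha$ and $A = \ff_r(Q)$ recovers formula~(\ref{lbl:eq:detDren:def:r:alpha}) after rewriting $(-1)^j = -(-1)^{j+1}$, and finiteness is immediate since $\detDr(Q)$ is finite by part~(1) applied with exponent $r < 1$ (one checks that $\ff_r(Q) \in \nL_1$ still holds here: indeed $\mtr{\ff_r(Q)} = \sum_k (e^{\lambda_k^r}-1)$, and since $\lambda_k \to 0$ we have $e^{\lambda_k^r}-1 \le e\,\lambda_k^r$, but $\sum \lambda_k^r$ may diverge for $r<1$—so in fact $\ff_r(Q)$ need not be trace-class, and $\detDr(Q)$ in the displayed formula should be read as the \emph{formal} product $\prod_k(1+\mu_k)$, with the product of $\detDr(Q)$ and the exponential correction being the genuinely convergent quantity $\det_\alpha$).

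The main obstacle is precisely this last point: for $r \in (0,1)$ the operator $\ff_r(Q)$ lies in $\nL_\alpha$ but not necessarily in $\nL_1$, so the ordinary Fredholm determinant $\detDr(Q)$ on the right of~(\ref{lbl:eq:detDren:def:r:alpha}) diverges in general and only the regularized combination $\det_\alpha$ converges. The careful way to handle this is to define $\det_\alpha(\bbbone + A) := \prod_k \left[(1+a_k)\exp\!\big(\sum_{j=1}^{\alpha-1}\frac{(-1)^j}{j}a_k^j\big)\right]$ over the eigenvalues $a_k$ of $A$, and then to prove this infinite product converges absolutely for $A \in \nL_\alpha$ by the estimate $\big|(1+a)e^{\sum_{j=1}^{\alpha-1}(-1)^j a^j/j} - 1\big| \le C_\alpha |a|^\alpha$ valid for $|a|$ bounded (here all $a_k = \mu_k \ge 0$ are bounded by $e - 1$), summing against $\sum_k \mu_k^\alpha = \|\ff_r(Q)\|_\alpha^\alpha < \infty$. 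Once convergence of the product is established, the algebraic identity relating it to $\detDr$ times the exponential trace-correction is a termwise rearrangement; I would cite~\cite{Simon1977} or~\cite{GielerakSawerwainInPreparation} for the rigorous version rather than reproduce it. I would also remark that the admissible non-integer $\alpha$ with $\alpha r > 1$ can be accommodated by choosing the integer order $p = \lceil \alpha \rceil$ (or the least integer $\ge 1/r$), since $\nL_\alpha \subseteq \nL_p$ for $p \ge \alpha$, so the regularized determinant of that integer order is the one that is actually finite.
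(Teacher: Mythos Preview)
Your proposal is correct and follows exactly the route the paper indicates: the paper offers no separate proof but presents the proposition as an immediate corollary of the preceding Lemma together with the standard Fredholm and regularized-determinant machinery of \cite{Simon1977,Simon2005}, and you have supplied precisely those details. Your observation that for $r\in(0,1)$ the factor $\detDr(Q)$ appearing on the right of~\eqref{lbl:eq:detDren:def:r:alpha} is in general only formal (since $\ff_r(Q)$ need not lie in $\nL_1(\nH)$), with only the full regularized product $\det_\alpha$ being convergent, is a useful clarification that the paper leaves implicit.
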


With the use of the introduced determinants, we can rewrite the unified entropy formula of Hu-Ye:
\begin{itemize}

\item[(I)] if $r > 1$, $s \neq 0$
\begin{equation}
\forall_{\QinEH} \; \mathrm{HY}^{s}_{r}(Q) = \frac{1}{(r-1)s} \left( {( \log( \detD_r(Q) ) )}^s - 1 \right) ,
\label{lbl:eq:HY:v1}
\end{equation}

\item[(II)] if $r \in (0, 1)$, $s \neq 0$ the renormalised Hu-Ye entropy formula is given as 
\begin{equation}
\forall_{\QinEH} \; \mathrm{renHY}^{s}_{r}(Q) =  \frac{1}{(r-1)s} \left( {( \log (\detD^{ren}_{r,\alpha}(Q)) )}^s - 1 \right),
\label{lbl:eq:HY:v2}
\end{equation}
for $\alpha = \min \{ \alpha \in N : \alpha r \geq 1 \}$.
\end{itemize}

The main results of our investigations are being formulated now:

\begin{theorem}
Let $\dim( \nH ) = \infty$ and $\nH$ is separable. Let r > 1 then, for suitable value of $s \neq 0$ all the properties listed as HY(1), HY(2) (with the form independent of d), HY(3), HY(4), HY(5) and HY(6) are true, for $\mathrm{HY}^s_r$ on $E(Q)$.
\end{theorem}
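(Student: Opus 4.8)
The plan is to reduce every assertion in HY(1)--HY(6) to elementary facts about the single scalar functional $t_r(Q) := \mtr{Q^r} = \sum_{\lambda \in \sigma(Q)} \lambda^r = \log \detDr(Q)$, which for $r > 1$ is what really governs $\mathrm{HY}^s_r$. Since $\lambda^r \le \lambda$ for $\lambda \in [0,1]$, one has $0 < t_r(Q) \le 1$ for every $\QinEH$, with $t_r(Q) = 1$ exactly when $Q \in \partial E(\nH)$; hence $\mathrm{HY}^s_r$ is genuinely finite-valued on all of $E(\nH)$ when $r > 1$, and the regularised Fredholm determinant of the preceding Proposition enters only as a numerically stable representation of $t_r$, together with its $\nL_1$-continuity statement $\|\ff_r(Q_n) - \ff_r(Q)\|_1 \to 0 \Rightarrow \detDr(Q_n) \to \detDr(Q)$. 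At the outset I would record, from $\ff_r(Q) = e^{Q^r} - \bI$ and $\|Q^r\| \le 1$, the bound $\|\ff_r(Q_1) - \ff_r(Q_2)\|_1 = \|e^{Q_1^r} - e^{Q_2^r}\|_1 \le e\,\|Q_1^r - Q_2^r\|_1$, together with a dimension-free estimate $\|Q_1^r - Q_2^r\|_1 \le c_r\,\|Q_1 - Q_2\|_1$ valid for $r > 1$; this makes $Q \mapsto t_r(Q)$ Lipschitz on $E(\nH)$ in $\nL_1$-norm, with all constants independent of $\dim(\nH)$.

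With $t_r$ in hand, HY(1) is a pointwise-limit computation on the scalar $\frac{1}{(1-r)s}\bigl(t_r(Q)^s - 1\bigr)$: the limits $s \to 1$ and $s \to 0$ yield $T_r(Q)$ and $R_r(Q)$ by a one-line L'H\^opital argument once $t_r(Q) < \infty$ is available, and for the limit $r \to 1^{+}$ I would argue termwise, $\tfrac{\lambda - \lambda^{r}}{r-1} \uparrow -\lambda \log \lambda$, and invoke monotone convergence, so that $\lim_{r \to 1}\mathrm{HY}^1_r(Q) = H(Q)$ holds in $[0,\infty]$ even when $H(Q) = \infty$. HY(2) and the unitary invariance in HY(3) are immediate from $t_r(Q) \in (0,1]$, the identity $\mtr{(UQU^{\dagger})^{r}} = \mtr{Q^r}$, and the characterisation of the extremal cases; the ``form independent of $d$'' bound $0 \le \mathrm{HY}^s_r(Q) \le \frac{1}{(r-1)s}$ is precisely the $d \to \infty$ limit of the finite-dimensional bound, since the admissible sign of $s$ makes $(1-r)s < 0$. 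For the partial-trace parts of HY(3) the point is that a unit vector in $\nH_A \otimes \nH_B$ still admits a Schmidt decomposition when $\dim(\nH) = \infty$, so $Q_A$ and $Q_B$ share the same nonzero spectrum and hence the same value of $t_r$.

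The continuity HY(4), for $r > 1$ and $s \ge 1$, then follows by composing the Lipschitz bound for $t_r$ with the local Lipschitz bound for $u \mapsto \frac{1}{(1-r)s}(u^s - 1)$ on $(0,1]$, tracking the resulting constant down to $\frac{1}{r(r-1)}$. For the concavity HY(5) in the admissible regime $r \ge 1$, $r s \ge 1$, and for the triangle inequality HY(6) for $r > 1$, $s \ge r^{-1}$, I would use a truncation-and-limit argument: project $Q$ (resp.\ $Q$ on $\nH_A \otimes \nH_B$ together with its reductions) onto a growing sequence of finite-dimensional subspaces, renormalise to genuine states, apply the finite-dimensional results of Hu--Ye (resp.\ Rastegin) to the truncations, and let the dimension tend to infinity, upgrading the inequalities to $E(\nH)$ by means of the $\nL_1$-continuity from HY(4) and the convergence of the truncations in $\nL_1$-norm. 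The structural input for HY(5) is the convexity of $Q \mapsto \mtr{Q^r}$ for $r \ge 1$ together with the monotonicity/concavity of $u \mapsto \frac{1}{(1-r)s}(u^s - 1)$ in the stated parameter window; the truncation merely transfers the already-known finite-dimensional inequality.

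The main obstacle I anticipate is the trace-norm Lipschitz estimate $\|Q_1^r - Q_2^r\|_1 \le c_r\,\|Q_1 - Q_2\|_1$ underpinning both HY(4) and the truncation step: for $r \in [1,2]$ it comes cleanly from operator convexity of $x \mapsto x^r$, but for general $r > 1$ one must either interpolate, integrate the Fr\'echet derivative of $\tau \mapsto (\tau Q_1 + (1-\tau)Q_2)^r$ along the segment, or route the whole argument through $\|e^{Q_1^r} - e^{Q_2^r}\|_1$ and the regularised-determinant continuity already granted by the Proposition --- and in every route one must keep the constant free of $\dim(\nH)$ so that the estimate genuinely survives the infinite-dimensional limit. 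A secondary delicate point is the bookkeeping of the renormalisation factors in the truncation step for HY(5) and HY(6): one must check that the corrections they introduce vanish in the limit, which is routine but is exactly where a careless argument would fail.
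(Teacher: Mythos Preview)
Your proposal is correct and follows the same line the paper indicates: the paper's own proof consists of a single sentence stating that the argument is ``proper control (in this case, $r>1$) on the finite dimensional approximations'', which is precisely your truncation-and-limit strategy for HY(5)--HY(6), supplemented by the direct $t_r$-based arguments you give for HY(1)--HY(4). You have in fact supplied substantially more detail than the paper does, including correctly flagging the dimension-free trace-norm Lipschitz bound for $Q \mapsto Q^r$ as the one place requiring genuine care.
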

\begin{proof}
The main argument used for the proof %(see \cite{GpSpWInPreparation}) 
is the proper control (in this case, $r>1$) on the finite dimensional approximations. \qed
\end{proof}

\begin{theorem}
Let $\dim( \nH ) = \infty$ and let $\nH$ be separable. Let $r \in (0, 1)$, $s \neq 0$. Then, for any $Q \in E(Q)$ there exists $\alpha^{\star} \in N$ such that $\alpha^{\star} = \min \{ \alpha \in N: (e^{Q^{\alpha \cdot r}} - \bbbone) \in \nL_1(\nH) \}$ and the renormalised HY-entropy as given in (\ref{lbl:eq:detDren:def:r:alpha}) is finite and continuos in $\nL_{\alpha_{\star}}(\nH)$ -- norm.
\end{theorem}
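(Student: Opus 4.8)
The plan is to establish the three assertions of the theorem separately: the existence and characterisation of $\alpha^{\star}$, the finiteness of $\mathrm{renHY}^{s}_{r}(Q)$, and its continuity. The first is immediate from the bound $\lambda\le e^{\lambda}-1\le e\lambda$ on $[0,1]$ used in the proof of the Lemma. For $Q\in E(\nH)$ one has $0\le\lambda_{k}\le 1$ and $\sum_{k}\lambda_{k}=1$, so applying that bound with $\lambda_{k}^{\alpha r}$ in place of $\lambda$ gives $\mtr{e^{Q^{\alpha r}}-\bbbone}=\sum_{k}(e^{\lambda_{k}^{\alpha r}}-1)\le e\sum_{k}\lambda_{k}^{\alpha r}\le e$ for every integer $\alpha\ge 1/r$; hence $e^{Q^{\alpha r}}-\bbbone\in\nL_{1}(\nH)$ for all such $\alpha$, the set in the statement is nonempty, and $\alpha^{\star}\le\lceil 1/r\rceil$. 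The same two-sided eigenvalue bound gives, for every $\alpha\in\bN$, the equivalences $e^{Q^{\alpha r}}-\bbbone\in\nL_{1}(\nH)\iff\sum_{k}\lambda_{k}^{\alpha r}<\infty\iff\ff_{r}(Q)=e^{Q^{r}}-\bI\in\nL_{\alpha}(\nH)$, so $\alpha^{\star}=\min\{\alpha\in\bN:\ff_{r}(Q)\in\nL_{\alpha}(\nH)\}$ is precisely the Schatten order of $\ff_{r}(Q)$, that is, the natural order of the regularised Hilbert-Fredholm determinant in (\ref{lbl:eq:detDren:def:r:alpha}).

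For finiteness, note that $\ff_{r}(Q)\ge 0$ and, by the previous step, $\ff_{r}(Q)\in\nL_{\alpha^{\star}}(\nH)$. The theory of the order-$\alpha^{\star}$ regularised (Carleman-Hilbert-Fredholm) determinant \cite{Simon1977,Simon2005,GielerakSawerwainInPreparation} then makes $\detD^{ren}_{r,\alpha^{\star}}(Q)={\det}_{\alpha^{\star}}(\bbbone+\ff_{r}(Q))$ a finite, strictly positive real number, and, since $\log(1+(e^{\lambda_{k}^{r}}-1))=\lambda_{k}^{r}$, the quantity $\log\detD^{ren}_{r,\alpha^{\star}}(Q)$ equals the convergent series
\begin{equation}
\log\detD^{ren}_{r,\alpha^{\star}}(Q)=\sum_{k}\Bigl(\lambda_{k}^{r}-\sum_{j=1}^{\alpha^{\star}-1}\frac{(-1)^{j+1}}{j}\,(e^{\lambda_{k}^{r}}-1)^{j}\Bigr),
\end{equation}
whose $k$-th term is $O(\lambda_{k}^{\,r\alpha^{\star}})$ and hence absolutely summable because $\sum_{k}\lambda_{k}^{\,r\alpha^{\star}}<\infty$ by the definition of $\alpha^{\star}$. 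Inserting this finite real number into (\ref{lbl:eq:HY:v2}) shows $\mathrm{renHY}^{s}_{r}(Q)$ is finite for every admissible $s\neq 0$ (for non-integer $s$ one restricts, as usual, to the regime where $\log\detD^{ren}_{r,\alpha^{\star}}(Q)\ge 0$; this plays no further role).

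For continuity, fix $Q$, put $\alpha=\alpha^{\star}(Q)$, and --- following the pattern of Proposition part (1)(ii) --- prove that $\|\ff_{r}(Q_{n})-\ff_{r}(Q)\|_{\alpha}\to 0$ forces $\mathrm{renHY}^{s}_{r}(Q_{n})\to\mathrm{renHY}^{s}_{r}(Q)$, which is the intended meaning of continuity in the $\nL_{\alpha^{\star}}(\nH)$-norm. This is a composition of two continuous maps. First, Simon's Lipschitz bound for order-$\alpha$ regularised determinants, of the form $|{\det}_{\alpha}(\bbbone+A)-{\det}_{\alpha}(\bbbone+B)|\le\|A-B\|_{\alpha}\exp(\Gamma_{\alpha}(\|A\|_{\alpha}+\|B\|_{\alpha}+1)^{\alpha})$, applied with $A=\ff_{r}(Q_{n})$ and $B=\ff_{r}(Q)$ (whose $\nL_{\alpha}$-norms are bounded, a convergent sequence being bounded), yields $\detD^{ren}_{r,\alpha}(Q_{n})\to\detD^{ren}_{r,\alpha}(Q)$. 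Second, since $\detD^{ren}_{r,\alpha}(Q)>0$, the scalar map $x\mapsto\frac{1}{(r-1)s}((\log x)^{s}-1)$ is continuous at $x=\detD^{ren}_{r,\alpha}(Q)$, which closes the argument. If one instead wants the statement phrased intrinsically on $E(\nH)$, with respect to $\|Q_{n}-Q\|_{r\alpha}\to 0$, one prepends the estimate $\|\ff_{r}(Q_{n})-\ff_{r}(Q)\|_{\alpha}\le e\,\|Q_{n}^{r}-Q^{r}\|_{\alpha}$ (operator-Lipschitz continuity of $t\mapsto e^{t}$ on $[0,1]$, as $\|Q^{r}\|\le 1$) together with the Birman-Koplienko-Solomyak operator-H\"older inequality $\|Q_{n}^{r}-Q^{r}\|_{\alpha}\le\|Q_{n}-Q\|_{r\alpha}^{\,r}$, valid for $0<r<1$.

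The main obstacle is not the determinant estimate itself --- once the correct ideal $\nL_{\alpha^{\star}}(\nH)$ has been identified via the Lemma, continuity of ${\det}_{\alpha^{\star}}$ is a direct appeal to the regularised-determinant calculus of \cite{Simon1977,Simon2005}, and the HY(1)--HY(6)-type properties would likewise be read off from their finite-dimensional counterparts through this calculus. The real work lies in the two places where the fractional exponent intervenes: $\alpha^{\star}$ genuinely depends on $Q$ (it is the Schatten order of $\ff_{r}(Q)$), so the continuity obtained is local, around each fixed $Q$ with its own $\alpha^{\star}$; and $t\mapsto t^{r}$ for $r\in(0,1)$ is only $r$-H\"older, not Lipschitz, at the origin, so $\ff_{r}$ fails to be operator-Lipschitz and the passage between convergence of $\ff_{r}(Q_{n})$ in $\nL_{\alpha^{\star}}$ and an intrinsic topology on $E(\nH)$ must be routed through Birman-Koplienko-Solomyak operator-H\"older inequalities, with the attendant matching of Schatten exponents $\alpha^{\star}$ and $r\alpha^{\star}$.
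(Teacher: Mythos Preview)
The paper itself supplies no proof for this theorem: after the statement there is no \texttt{proof} environment, and the surrounding text defers the details to the reference marked ``in preparation''. Your proposal is therefore not competing with an existing argument but filling a genuine gap, and it does so along exactly the lines the paper's own apparatus suggests.

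Your three steps are all sound. The existence of $\alpha^{\star}$ is read off from the two–sided bound $\lambda\le e^{\lambda}-1\le e\lambda$ on $[0,1]$ precisely as in the paper's Lemma, and your identification of $\alpha^{\star}$ with the Schatten order of $\ff_r(Q)$ (via the equivalences $e^{Q^{\alpha r}}-\bbbone\in\nL_1\iff\sum_k\lambda_k^{\alpha r}<\infty\iff\ff_r(Q)\in\nL_\alpha$) is the key observation that ties the theorem to the regularised–determinant machinery. Finiteness then follows from the standard Carleman--Fredholm theory for ${\det}_{\alpha^{\star}}$ together with your termwise estimate $O(\lambda_k^{r\alpha^{\star}})$ on the log–series, and continuity from Simon's Lipschitz bound $|{\det}_\alpha(\bbbone+A)-{\det}_\alpha(\bbbone+B)|\le\|A-B\|_\alpha\exp(\Gamma_\alpha(\|A\|_\alpha+\|B\|_\alpha+1)^\alpha)$, which is exactly the right tool here. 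The Birman--Koplienko--Solomyak step you append is a genuine addition: the paper never specifies whether ``continuity in $\nL_{\alpha^{\star}}$--norm'' is meant on $\ff_r(Q)$ or intrinsically on $Q$, and you cover both readings.

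Two minor remarks. First, your observation that $\alpha^{\star}$ depends on $Q$, so that continuity is inherently local, is an honest limitation that the paper's formulation glosses over; it is correct and worth stating. Second, your parenthetical about non–integer $s$ is warranted: since the $k$–th summand of $\log\detD^{ren}_{r,\alpha^{\star}}(Q)$ behaves like $(-1)^{\alpha^{\star}+1}\lambda_k^{r\alpha^{\star}}/\alpha^{\star}$, the logarithm can indeed be negative when $\alpha^{\star}$ is even, and $(\log x)^{s}$ is then ill–defined for generic real $s$ --- a point the paper does not address at all.
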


\begin{remark}
The problem whether and which one (from the list HY(1) -- HY(5) as listed for the case $\dim(\nH) < \infty$) and other properties like strong-subadditivity %\cite{Carlen1999}, 
\cite{Lieb1973b} are preserved  under the proposed here renormalisation approach are under investigations. %\cite{GpSpWInPreparation}.
\label{lbl:remark:HY:properties:question}
\end{remark}

\begin{example}
Let $\zeta$ be the zeta function of Riemann %\cite{Berry1986}, 
i.e.
\begin{equation}
\zeta(q) = \sum_{n=1}^{\infty} \frac{1}{n^q}, \; \mathrm{for} \; q > 1.
\end{equation}
Let $\nP$ be the set of primes. The operator $Q^{\zeta}_{ q, r } \in E(\nH)$ is such that
\begin{equation}
	\sigma( Q^{\zeta}_{q,r} ) = \bigg( z^{-1} ( \log( 1 + \frac{1}{p^q_k}) )^{1/r} \bigg)_{k}, \;\;\; \mathrm{where} \;\;\; 
	z = \sum_{k=1}^{\infty}  \bigg( \log( 1 + \frac{1}{p^q_k}) \bigg)^{1/r} ,
\end{equation}
which is finite for $s > 1$, $r > 1$ and $p_k \in \nP$ is the k-th prime (numbered as k-th in the natural ordering of $\nP$). Then, by the use of formula:
%\begin{equation}
$\prod_{p \in \nP}  \bigg( 1 + \frac{1}{ p^q }  \bigg) = \frac{ \zeta( q ) }{ \zeta( 2q ) }$,
%\end{equation}
it follows that
%\begin{equation}
$\detD_r( Q^{\zeta}_{ q, r } ) =  \frac{ \zeta( q ) }{ \zeta( 2q ) }$
%\end{equation}
and therefore, for $s=1$, 
\begin{equation}
	% \frac{ \zeta( q ) }{ \zeta( 2q ) } = e^{\frac{1}{s}} \cdot \frac{ e^{ (1-r)s } }{  H^s_r  ( Q^{\zeta}_{ q,r } )  }.
	 \frac{ \zeta( q ) }{ \zeta( 2q ) } = \exp \bigg(  {(r-1)}   H^1_r  ( Q^{\zeta}_{ q,r } ) +1   \bigg).
\end{equation}
Defining the corresponding to $Q^{\zeta}_{s,r}$ Hamiltonian:
%\begin{equation}
 $h^{\zeta}_{ q,r } = \log Q^{\zeta}_{ q, r } \geq 0$ ,
%\end{equation}
which is self-adjoint and positive, we enrich the class of Hamiltonians the spectrum of which are connected to the zeta function of Riemann %\cite{Berry1986}, 
\cite{Berry1999}, \cite{Bender2017}.
\end{example}

\section{Numerical Examples} \label{lbl:numerical:examples}

In this section, we present a two numerical examples related to quantum states for which we calculate entropy by the use of the Fredholm determinant technique. A source code in Python language of all discussed below examples is publicly available at \cite{RepoSourceCodeGWJ}. The first example is devoted the quantum state X and unified renormalised entropy like in Eq.~\ref{lbl:eq:HY:v1} and Eq.~\ref{lbl:eq:HY:v2} where for a given dimensionality, we check the correctness of the triangle inequality (HY6) with the entropy expressed by Eq.~\ref{lbl:eq:HY:v1}. In other words, we calculate partial traces for subsystem A~and B~and compare subsystem's entropies to the entropy of the whole quantum state X. In the second example, we study numerical calculations of entropy value for the two–mode squeezed Gaussian states.

Before the presentation of examples, let us notice that the entropy's computational complexity calculated directly, i.e. by Eq.~\ref{lbl:eq:HY:v2}, depends on a complexity of the function pointing out logarithm of a matrix and the operator $\ff_r( \cdot )$. Therefore, if $T$ stands for a general computational complexity, where $n$ marks a dimension of the operator $Q$, we obtain:
\begin{multline}
T(n) = T_s(1) + T_D(Q) = T_s(1) + \bigg( T_D(Q) + \sum_{j=1}^{\alpha} T_{\mathrm{Tr}}(Q) + T_{\ff}(Q) \bigg) \\ = O(1) + \bigg( O(n^3) + \sum_{j=1}^{\alpha} O(n^2) + O(n^3) \bigg) = O(n^3),
\end{multline}
where $T_s$ means constant values, e.g. a fraction, $T_D$ is the complexity of determinant $\detD^{ren}$, $T_{\mathrm{Tr}}$ determines the time of calculating a trace, and $T_{\ff}$ is a complexity of the function $\ff_r$. Generally, the matrix functions may be computed with the use of the spectral decomposition with its complexity $O(n^3)$ what determines a complexity of the renormalised Hu-Ye entropy (Eq.~\ref{lbl:eq:HY:v2}).  

In further part of this section, we also present an entropy approximation for the Gaussian bipartite state by a procedure according \cite{Bornemann2010} which is: 

\begin{verbatim}
def fredholm_det(K, z, a, b, m):
    w,x=gauss_legendre_quadrature(a,b,m)
    w = np.sqrt(w) 
    xi,xj = np.meshgrid(x, x, indexing='ij')
    d = np.linalg.det( np.eye(m) + z * np.outer(w,w) * K(xi,xj) )
    return d
\end{verbatim}
This procedure's complexity may be described as:
\begin{equation}
T(n) = T_{QLQ}(m) + T_s(n) + T_{det}(Q) = O(m^2) + O(n^2) + O(n^3) = O(n^3).
\end{equation}
The most dominant operation here is calculating the determinant ($T_{det}$) because the kernel function $K$ is computed linearly (depending on the number of points in a given quadrature). While, calculating the quadrature's coefficients $T_{QLQ}(m)$ is characterised by the quadratic complexity depending on the parameter $m$.

\subsection{The $d$ dimensional X quantum state}

We also examine one of possible generalisations of the X-type quantum state which can be depicted in two form depending on whether the $n$ is even $Q^{Xe}_{d}$ or odd $Q^{Xo}_{d}$:

\begin{equation}
	\includegraphics[width=0.95\textwidth]{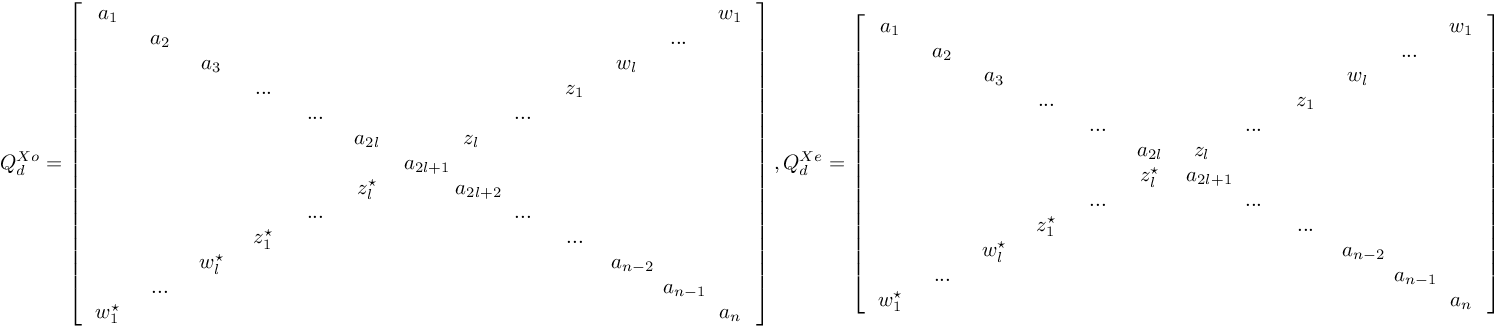}
\end{equation}
where empty cells represent zeros, $n = d^2$, and for $a_i$, $z_i$, $w_i$ we give additional assumptions:
\begin{equation}
	\sum_{i=1}^{n} a_i = 1, a_i \geq 0 , \;\;
	\;
	\prod_{i=1}^{ l }  |z_i|  \leq \sqrt{ \prod_{j=l+1}^{n-l} a_j  },
	%\sqrt{ \prod_{i=\ceil{n/4}}^{\floor{3n/4 }} a_i} , \;\;
	\;
	\prod_{i=1}^{ l }  |w_i|  \leq \sqrt{ \left( \prod_{j=1}^{l} a_j \right) \left( \prod_{j=n-l+1}^{n} a_j \right) }  .
\end{equation}
where $l=\floor{\frac{d^2}{4}}$.
 
Fig.~\ref{lbl:fig:X:HY:entropy} depicts the numerical experiment for the state $X$ with dimensions 2, 3, 4, 5 where we check the triangle inequality (HY6) given by Eq.~\ref{lbl:eq:HYsr:triangle:inequality}. This experiment may be also carried out in a parallel environment because each state is checked independently, so it might be realised in the same time. It should be noticed, that the experiment's implementation, thanks to the Python's environment and auxiliary library EntDetector \cite{EntDetector2021}, needs only few lines of code, e.g.  
\begin{verbatim}
	m = create_x_state( d ) 
	pt0 = ed.PT(m, [d,d], 1); pt1 = ed.PT(m, [d,d], 0)	
	ent_m =  HY_by_d( m, s, r )
	ent_pt0 = HY_by_d( pt0, s, r ); ent_pt1 = HY_by_d( pt1, s, r );
\end{verbatim}
In the first line, we calculate state $X$ with $d$ as the dimension of subsystems A and B. In the next line, we compute partial traces (with \verb"ed.PT( )" function). Finally, the entropy is obtained according to Eq.~\ref{lbl:eq:HY:v1}.

\begin{figure}
\begin{tabular}{ccc}
& & \vspace{-0.75cm}\multirow{2}{*}{\includegraphics[width=0.45\textwidth]{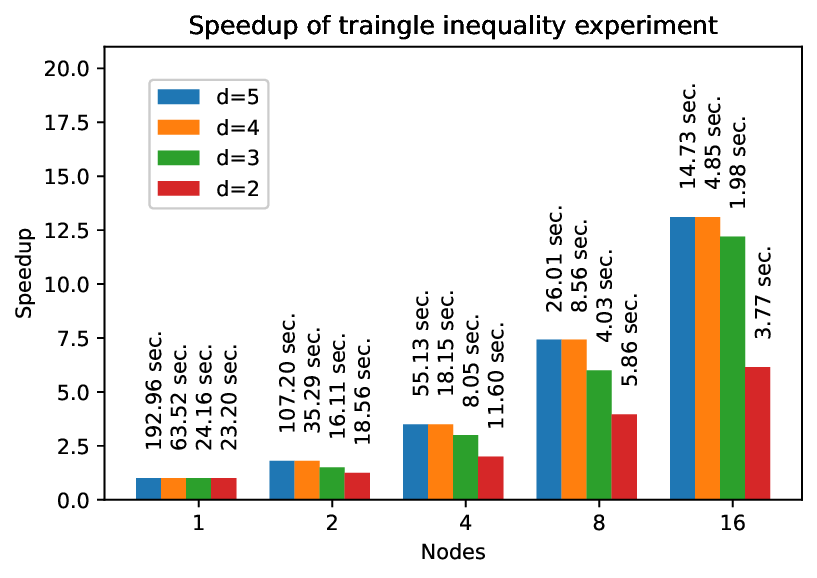}}  \\
\includegraphics[width=0.25\textwidth]{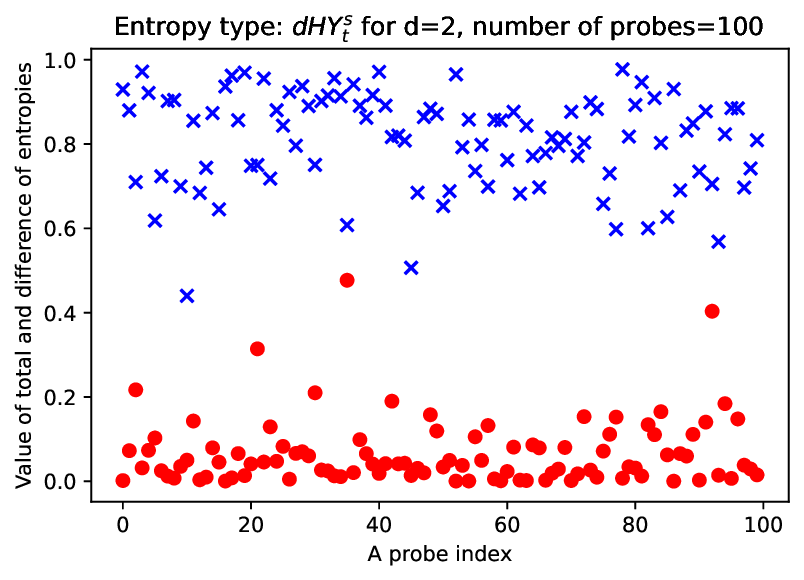} & \includegraphics[width=0.25\textwidth]{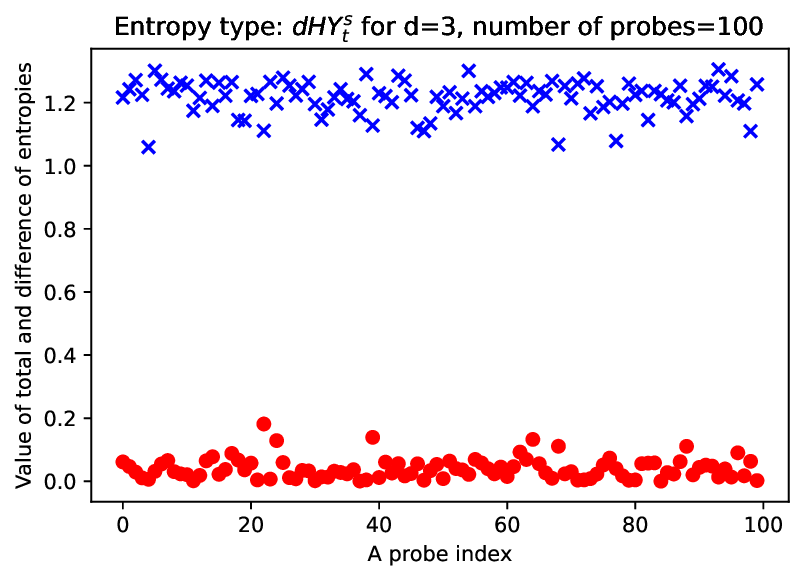} &  \\
\includegraphics[width=0.25\textwidth]{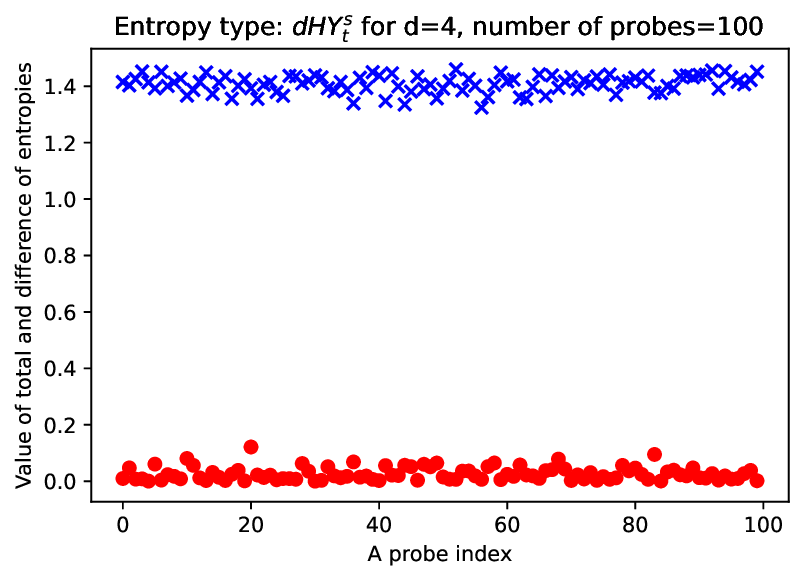} & \includegraphics[width=0.25\textwidth]{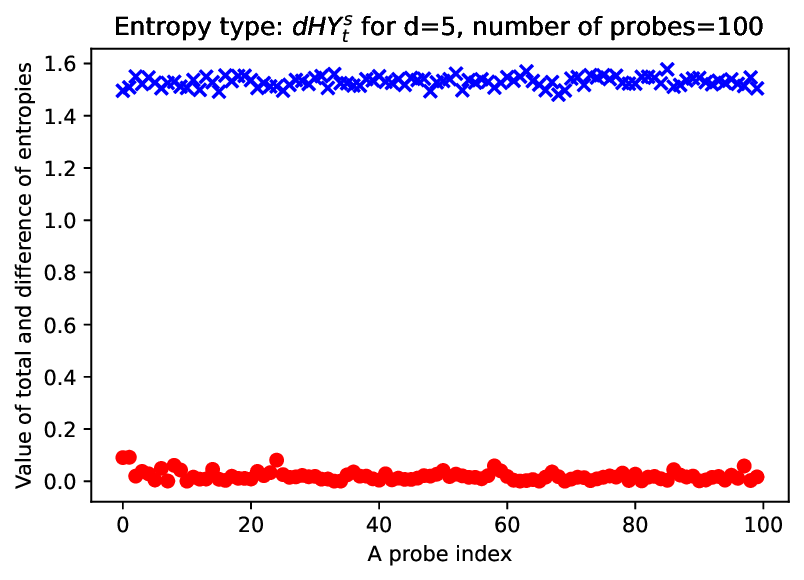} & 
\end{tabular}
\caption{Results of the numerical experiment in which the triangle inequality for unified entropy is verified on the state $X$. The charts represent values for one hundred exemplary states $X$ with, respectively, dimensions $d=2,3,4,5$ (400 samples). Crosses indicate the entropy of the whole system $X$, dots mark the absolute value for subsystems A and B of the state $X$. The utilised entropy is $HY^{s}_r$ (s=0.5, r=2). The chart on the right shows the acceleration gained when the experiment is carried out under WSL environment for Windows 11, with AMD Ryzen 9 7950X processor
}
\label{lbl:fig:X:HY:entropy}
\end{figure}

\subsection{The infinite bipartite case}

The third example that we would like to present is calculating entropy value for the two–mode squeezed Gaussian states given as:
\begin{equation}
\mket{\psi_r} =  \frac{1}{\cosh(r)} \sum_{N=0}^{\infty} \tanh^N(r) \mket{N}_A \otimes \mket{N}_B,
\end{equation}
where $r > 0$ is the squeezing parameter and $\mket{N}$ stands for the "N-particles state". It is stated in \cite{Giedke2003} that the formula for the entropy for this state is: 
\begin{equation}
E_G(\mket{\psi_r}) = \cosh^2(r) \log(\cosh^2(r)) - \sinh^2(r) \log( \sinh^2(r) ).
\label{lbl:eq:egr:formula}
\end{equation}
The value of entropy diverge to infinity according to increasing value of $r$, however even for relatively small values $r>17$ the above equation may generate an overflow error for computations carried out on double-precision numbers. However, the technique presented in this work together with the procedure \verb"fredholm_det", presented at the beginning of Sec.~\ref{lbl:numerical:examples}, correctly approximate the value of $E_G(\mket{\psi_r})$. To perform calculations, we need the kernel function which, in this case, is: 
\begin{equation}
K(x_i, x_j) = \frac{ \tanh(x_i + x_j) }{ \cosh( x_i - x_j ) },
\end{equation}
where $x_i$, $x_j$ are arguments of the quadrature utilised in the procedure \verb"fred"\-\verb"holm_det".
The technique based on the Fredholm determinants allows approximating entropy values with better numerical stability. The results are depict in Fig.~\ref{lbl:fig:BGS:entropy} where values of $E_G(\mket{\psi_r})$ are calculated for two exemplary ranges of parameter $r$.   

\begin{figure}
\begin{center}
\begin{tabular}{cc}
(a) & (b) \\
\includegraphics[width=0.40\textwidth]{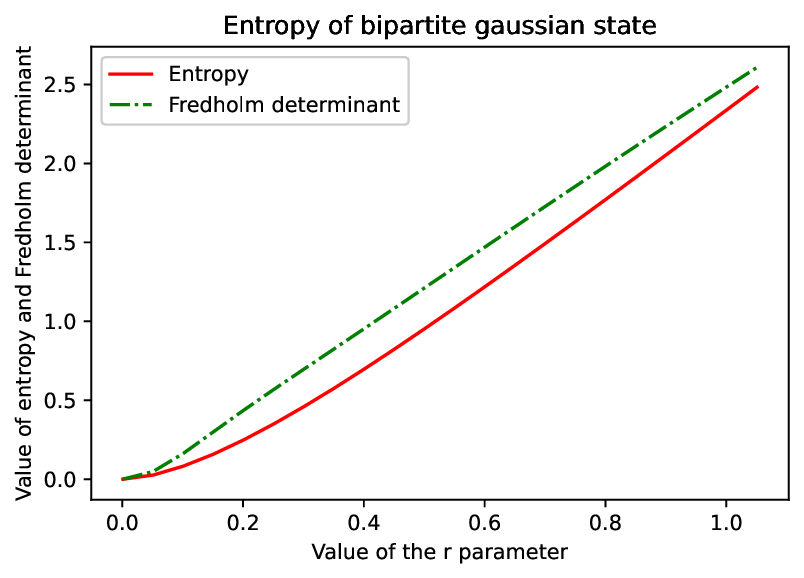} & \includegraphics[width=0.40\textwidth]{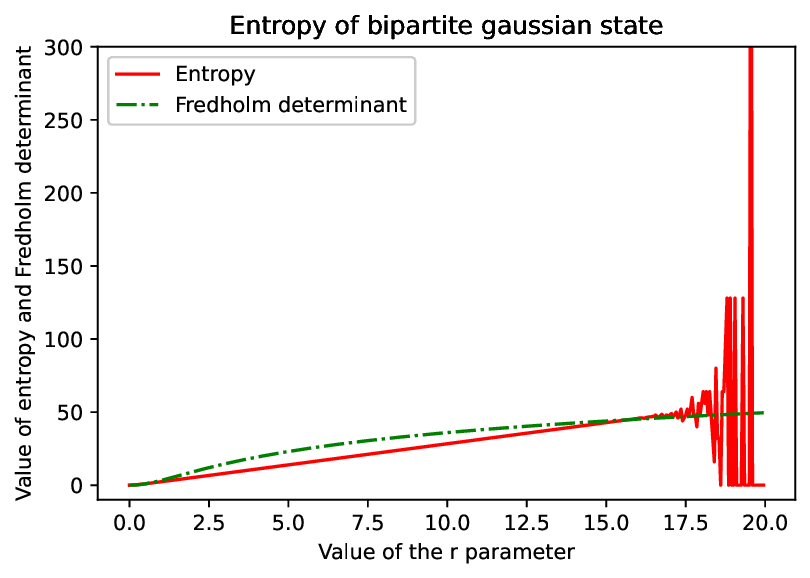} 
\end{tabular}
\end{center}
\caption{Entropy value for a Gaussian bipartite state (the solid line is calculated according to the analytical equation and the dash-dot line with the use of the Fredholm determinants). Chart (a) shows the entropy values for a small range of $r \in (0,1)$ and chart (b) for $r \in (0,20]$. We can observe unstable behaviour of the formula Eq.~\ref{lbl:eq:egr:formula} for $r \approx 20$, where overflow floating-point error appears due to nature of cosh and sinh functions}
\label{lbl:fig:BGS:entropy}

\end{figure}

\section{Conclusions} \label{lbl:conclusions}

In this article, we have shown that the Fredholm determinants may be successfully applied to calculate the entropy values for finite and infinite cases. This second case is especially important because for states described by the Hilbert space, the values of von Neumann entropy might be infinite. The renormalisation process allows calculating the approximate entropy value in many cases without any numerical overflow problems.  

There are still opened issues, e.g. as in Remark~\ref{lbl:remark:HY:properties:question} -- which properties of the renormalised entropy remain true when the parameter $r \in (0,1)$. Further works should also focus on numerical procedures (e.g. utilisation of other quadratures to better estimate the entropy values for finite and infinite cases). 

\begin{credits}
\subsubsection{\ackname}
This work was financed by Military University of Technology under research project UGB 701/2024 and by a subsidy for research projects in Technical Computer Science and Telecommunication discipline in University of Zielona Góra for year 2024.

\subsubsection{\discintname}
All authors declare that they have no conflicts of interest.
\end{credits}
%
% ---- Bibliography ----
%
% BibTeX users should specify bibliography style 'splncs04'.
% References will then be sorted and formatted in the correct style.
%
% \bibliographystyle{splncs04}
% \bibliography{mybibliography}
%

\end{document}